\newtheorem{theorem}{Theorem}
\newtheorem{lemma}{Lemma}
\newtheorem{definition}{Definition}
\newtheorem{remark}{Remark}
\newtheorem{example}{Example}
\begin{document}

\title{Ergodic Layered Erasure One-Sided Interference Channels}
\pubid{~}
\specialpapernotice{~}%

\author{\authorblockN
{Vaneet Aggarwal, Lalitha Sankar, A. Robert Calderbank, and H. Vincent Poor}
\authorblockA
{Department of Electrical Engineering, Princeton University, Princeton, NJ 08544.\\
Email: \{vaggarwa, lalitha, calderbk, poor\}@princeton.edu\\}
}%
%

\maketitle
%

\begin{abstract}%

\footnotetext{The work of V. Aggarwal and A. R. Calderbank was
supported in part by NSF under grant 0701226, by ONR under grant
N00173-06-1-G006, and by AFOSR under grant FA9550-05-1-0443. The
work of L. Sankar and H. V. Poor was supported in
 part by the National Science Foundation under grant CNS-06-25637.}The sum capacity of a class of layered erasure one-sided
interference channels is developed under the assumption of no
channel state information at the transmitters. Outer bounds are
presented for this model and are shown to be tight for the
following sub-classes: i) weak, ii) strong (mix of strong but not
very strong (SnVS) and very strong (VS)), iii) \textit{ergodic
very strong} (mix of strong and weak), and (iv) a sub-class of
mixed interference (mix of SnVS and weak). Each sub-class is
uniquely defined
by the fading statistics.%

\end{abstract}%

\section{Introduction}

The capacity region of interference channels (IFCs), comprised of two or more
interfering links (transmitter-receiver pairs), remains an open problem. The
sum capacity of a non-fading two-user IFC is known only when the interference
is either stronger or much weaker at the unintended than at the intended
receiver (see, for e.g.,
\cite{cap_theorems:Sato_IC,cap_theorems:Carleial_VSIFC,cap_theorems:ShangKramerChen,cap_theorems:AR_VVV,cap_theorems:MotaKhan}%
, and the references therein). Recently, the sum capacity and optimal power
policies for two-user ergodic fading IFCs are studied in
 \cite{cap_theorems:SEP} and \cite{cap_theorems:Tuninetti} under the assumption that the
instantaneous fading channel state information (CSI) is known at all nodes. A
sum capacity analysis for $K$-user ergodic fading channels using ergodic
interference alignment is developed in
\cite{cap_theorems:Nazer01} and \cite{cap_theorems:Jafar_ErgIFC}. In general, however,
the instantaneous CSI is not available at the transmitters and often involves
feedback from the receivers. Thus, it is useful to study the case in which
only receivers have perfect CSI and the transmitters are strictly restricted
to knowledge only of the channel statistics.

The sum capacity of multi-terminal networks without transmit CSI remains a
largely open problem with the capacity known only for ergodic fading Gaussian
multiaccess channels (MACs) without transmit CSI. For this class of channels,
it is optimal for each user to transmit at its maximum average power in each
use of the channel (see for e.g.,
\cite{cap_theorems:TH01} or \cite{cap_theorems:Shamai_Wyner01}). The receiver, with
perfect knowledge of the instantaneous CSI, decodes the messages from all
transmitters jointly over all fading realizations.

Recently, the sum capacity of ergodic fading two-receiver broadcast channels
(BCs) without transmit CSI has been studied in \cite{cap_theorems:TseYatesLi_01}.
The authors first develop the sum capacity achieving scheme for an
\textit{ergodic layered erasure BC} where the channel from the source to each
receiver is modeled as a time-varying version of the binary expansion
deterministic channel introduced in \cite{cap_theorems:AvestimehrDS_01}. In
this model, the transmitted signal is viewed as a vector (layers) of bits from
the most to the least significant bits. Fading is modeled as an erasure of a
random number of least significant bits and the instantaneous erasure levels,
or equivalently the number of received layers (or levels), are assumed to be
known at the receivers. For a layered erasure fading\ BC, the authors in
\cite{cap_theorems:TseYatesLi_01} show that a strategy of signaling
independently on each layer to one receiver or the other based only on the
fading statistics achieves the sum capacity. Furthermore, the authors also
demonstrate the optimality of their achievable scheme to within 1.44 bits/s/Hz
of the capacity region for a class of high-SNR channel fading distributions.

In this paper, we introduce an ergodic fading layered erasure one-sided
(two-user) IFC in which, in each channel use, one of the receivers receives a
random number of layers from its intended transmitter while the other receiver
receives a random number of layers from both transmitters. One can view this
channel as a time-varying one-sided version of a two-user binary expansion
deterministic IFC introduced and studied in \cite{cap_theorems:Bresler_Tse}.
The model in \cite{cap_theorems:Bresler_Tse} is a subset of the class of
deterministic IFCs whose capacity region is developed in
\cite{cap_theorems:CostaElGamal_IC2}. More recently, in
\cite{cap_theorems:Aggarwal_Liu_Sabharwal_IC}, the sum capacity of a class of
one-sided two-user and three-user IFCs in which each transmitter has limited
information about its connectivity to the receivers is developed. For the
ergodic layered erasure one-sided IFC considered here, we develop outer bounds
and identify fading regimes for which the strategies of either decoding or
ignoring interference at the interfered receiver is tight. We classify the
capacity achieving regimes based on the fading statistics of the direct and
interfering links as follows: i) weak, ii) strong (mix of strong but not very strong (SnVS) and very strong (VS)), iii)
\textit{ergodic very strong} (mix of SnVS, VS, and weak), and (iv) a sub-class
of mixed interference (mix of SnVS and weak).

The paper is organized as follows. In Section \ref{Sec_2} we
introduce the channel model. In Section \ref{Sec_3}, we develop
the capacity region of a layered erasure multiple-access channel.
In Section \ref{Sec_4}, we develop outer bounds for the layered
erasure IFC and identify the regimes where these bounds are tight
using in part the results developed in Section \ref{Sec_3}. We
conclude in Section \ref{Sec_5}.

\section{\label{Sec_2}Channel Model and Preliminaries\newline}
\vspace{-.15in}A two-user IFC consists of two point-to-point
transmitter-receiver links where the receiver of each link also
receives an interfering signal from the unintended transmitter. In
a deterministic IFC, the input at each transmitter is a vector of
$q$ bits. We write $X_{k}^{q}=\left[ X_{k,1}\text{ }X_{k,2}\text{
}\ldots\text{ }X_{k,q}\right]  ^{T}$, $k=1,2,$ to denote the input
at the $k^{th}$ transmitter such that $X_{k,1}$ and $X_{k,q}$
are the most and the least significant bits, respectively.
Throughout the sequel, we refer to the bits as levels or layers,
and write the input at level $n$ for transmitter $k$ as $X_{k,n}$,
for all $n=1,2,\ldots,q$. The received signal of user $k$, is
denoted by the $q$-length vector $Y_{k}^{q}=\left[  Y_{k,1}\text{
}Y_{k,2}\text{ }\ldots\text{ }Y_{k,q}\right] ^{T}$.

Associated with each transmitter $k$ and receiver $j$ is a non-negative
integer $n_{jk}$ that defines the number of bit levels of $\mathbf{X}_{k}$
observed at receiver $j$. The maximum level supported by any link is $q$.
Specifically, an $n_{jk}$ link erases $q-n_{jk}$ least significant bits of
$X_{k}^{q}$ such that only $n_{jk}$ most significant bits of $X_{k}^{q}$ are
received as the $n_{jk}$ least significant bits of $Y_{k}^{q}$. The missing
entries $X_{k,n_{jk}+1},\ldots,X_{k,q}$ have been erased by the fading
channel. Thus, we have \cite{cap_theorems:AvestimehrDS_01}%
\begin{align}
Y_{k}^{q}  &  =\left[  0\text{ }0\text{ \ldots\ }0\text{ }X_{k,1}\text{
}X_{k,2}\text{ }\ldots\text{ }X_{k,n_{jk}}\right]  ^{T}\\
&  =\mathbf{S}^{q-n_{jk}}X_{k}^{q}%
\end{align}
where $\mathbf{S}^{q-n_{jk}}$ is a $q\times q$ shift matrix with entries
$S_{m,n}$ that are non-zero only for $\left(  m,n\right)  =(q-n_{jk}%
+n,n),n=1,2,\ldots,n_{jk}$.

In a layered erasure IFC, we model each of the four transmit-receive links as
a $q$\textit{-bit layered erasure channel}. A $q$-bit layered erasure channel
is defined in \cite{cap_theorems:TseYatesLi_01} and summarized below.

\begin{definition}
[\cite{cap_theorems:TseYatesLi_01}]\label{Def1}A $q$-bit layered erasure
channel has input $X^{q}\in\mathbb{F}_{2}^{q}$ and output $Y^{q}=\left[
0\text{ }\ldots\text{ }0\text{ }X^{N}\right]  $ where $N$ is an integer
channel state that is independent of $X^{q}$ and satisfies $P\left[
N\geq0\right]  =1$ and $P\left[  N\geq q+1\right]  =0.$
\end{definition}

From Definition \ref{Def1}, in every use of the channel, the received signal
$Y_{j}^{q}$, $j=1,2,$ of a layered erasure IFC is given by%
\begin{equation}%
\begin{array}
[c]{cc}%
Y_{j}^{q}=\mathbf{S}^{q-N_{j1}}X_{1}^{q}\oplus\mathbf{S}^{q-N_{j2}}X_{2}^{q}, &
j=1,2,
\end{array}
\label{Y_IFC}%
\end{equation}
where $\oplus$ denotes the XOR operation, $N_{11}$ and $N_{22}$ are the random
variables representing the fading channel states over the direct links, and
$N_{21}$ and $N_{12}$ are the random variables representing the cross-link
fading states. The one-sided IFC considered in this paper is obtained by
setting $N_{12}=0$, i.e., receiver $1$ sees no interference from transmitter
$2$. One can visualize the resulting one-sided channel as an `S-IFC'.

As a first step towards developing the sum capacity of a layered erasure
one-sided IFC, we will develop the ergodic sum capacity of a two-user layered
erasure multiple-access channel (MAC), consisting of one receiver and two
transmitters. For this MAC, the received signal $Y_{j}^{q}$, $j=1$, is given
by (\ref{Y_IFC}). For simplicity, we eliminate the subscript $1$ and write
$N_{1}$ and $N_{2}$ to denote the fading states of transmitters 1 and 2 to the
receiver, respectively.

For a random variable $N\,$, we write $\Pr\left[  N=n\right]  $ to
denote the probability mass function and $\overline{F}_{N}\left(
n\right)  $ to denote the complementary cumulative distribution
function (CDF). It is straightforward to verify that
\begin{equation}
\mathbb{E}[N]=\sum_{n=1}^{q}\overline{F}_{N}(n)=\sum_{n=1}^{q}\Pr
\left[ N\geq
n\right]  . \label{EN_FN}%
\end{equation}
We also write $x^{+}=\max\left(  x,0\right)  $. All logarithms are are taken
to the base 2 and the rates are in units of bits per channel use. Throughout
the sequel we use the words transmitters and users interchangeably.
\vspace{-.1in}
\section{\label{Sec_3}Layered Fading MAC:\ sum capacity}
\vspace{-.05in}
Consider a multiple access channel with the two transmitters
transmitting $X_{1}$ and $X_{2}$ respectively, and a received
signal $Y$ given by%
\vspace{-.05in}
\begin{equation}
Y=X_{1}^{N_{1}}\oplus X_{2}^{N_{2}},
\end{equation}
where $N_{1}$ and $N_{2}$ are the channel states for the links from the two
transmitters to the receiver respectively. Both random variables $N_{1}$ and
$N_{2}$ satisfy ${\Pr}[N_{i}\ge0]=1$ and ${\Pr}[N_{i}\ge q+1]=0$.

\begin{theorem}
The capacity region for the layered erasure multiple access channel is given
by
\begin{align}
R_{1}  &  \leq\mathbb{E}[N_{1}]\\
R_{2}  &  \leq\mathbb{E}[N_{2}]\\
R_{1}+R_{2}  &  \leq\mathbb{E}[\max(N_{1},N_{2})] .
\end{align}

\end{theorem}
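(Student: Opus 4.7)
\medskip
\noindent\textbf{Proof plan.} I would prove the two halves separately, treating the receiver's knowledge of $(N_1,N_2)$ as channel state information at the receiver and viewing $(Y,N_1,N_2)$ as the effective channel output.

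\medskip
\noindent\emph{Converse.} Assume a length-$n$ code of rates $(R_1,R_2)$ with vanishing error probability. Fano's inequality combined with the independence of the messages gives, in the usual MAC way,
\begin{align*}
nR_1 &\le I(X_1^n;Y^n\mid X_2^n,N_1^n,N_2^n)+n\epsilon_n,\\
nR_2 &\le I(X_2^n;Y^n\mid X_1^n,N_1^n,N_2^n)+n\epsilon_n,\\
n(R_1{+}R_2) &\le I(X_1^n,X_2^n;Y^n\mid N_1^n,N_2^n)+n\epsilon_n.
\end{align*}
For the single-user bounds, upper-bound the mutual information by $H(Y^n\mid X_{3-i}^n,N_1^n,N_2^n)$, apply the chain rule, and drop past/future conditioning to get $\sum_t H(Y_t\mid X_{3-i,t},N_{1,t},N_{2,t})$. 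Given $X_{3-i,t}$ and the fading levels, $Y_t$ is a deterministic function of $X_{i,t}^{N_{i,t}}$, which has at most $N_{i,t}$ non-zero bits, so each summand is at most $\mathbb{E}[N_i]$. For the sum bound, drop everything but the fading and use $H(Y^n\mid N_1^n,N_2^n)\le\sum_t H(Y_t\mid N_{1,t},N_{2,t})$, then observe that conditioned on $(N_{1,t},N_{2,t})=(n_1,n_2)$, the vector $Y_t$ has zeros in its top $q-\max(n_1,n_2)$ positions, so its entropy is at most $\max(n_1,n_2)$. Averaging over the fading gives $\mathbb{E}[\max(N_1,N_2)]$ and the three desired inequalities follow after letting $n\to\infty$.

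\medskip
\noindent\emph{Achievability.} I would invoke the standard MAC capacity theorem with state known at the receiver and evaluate it at the product input distribution in which every bit of $X_1$ and $X_2$ is i.i.d.\ $\mathrm{Bernoulli}(1/2)$. With this choice:
\begin{itemize}
\item $I(X_1;Y\mid X_2,N_1,N_2)=H(Y\mid X_2,N_1,N_2)=H(X_1^{N_1}\mid N_1)=\mathbb{E}[N_1]$, since subtracting the known $X_2^{N_2}$ (an XOR) leaves only $X_1^{N_1}$, whose $N_1$ non-zero bits are independent uniform. By symmetry, $I(X_2;Y\mid X_1,N_1,N_2)=\mathbb{E}[N_2]$.
\item $I(X_1,X_2;Y\mid N_1,N_2)=H(Y\mid N_1,N_2)=\mathbb{E}[\max(N_1,N_2)]$: conditioned on $(N_1,N_2)=(n_1,n_2)$ the bottom $\max(n_1,n_2)$ positions of $Y$ are either a single uniform bit or an XOR of two independent uniform bits, hence each is uniform and, because the summand pairs of input bits are disjoint across positions, the positions are jointly independent.
\end{itemize}
These three mutual informations exactly match the three outer-bound expressions, so the achievable pentagon coincides with the converse region.

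\medskip
\noindent\emph{Expected difficulty.} Nothing in the argument is deep: the converse is a textbook MAC Fano calculation once one recognizes that $Y_t$ carries at most $\max(N_{1,t},N_{2,t})$ bits of information, and the achievability reduces to evaluating the polymatroid at the uniform product law. The only subtle point is the computation of $H(Y\mid N_1,N_2)$ under uniform inputs; I would handle it by splitting the $q$ output coordinates into the three layer ranges (only user~1 present, only user~2 present, both present via XOR) and noting that within each range the bits are independent uniform, so the total entropy is the number of non-zero positions, namely $\max(N_1,N_2)$.
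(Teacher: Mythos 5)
Your proof is correct, but it takes a genuinely different route from the paper. The paper dismisses the converse as straightforward and devotes its proof to an \emph{explicit, constructive} achievability scheme: each user signals with an independent codebook on each bit level (user 1 at rate $\Pr(N_1-N_2\geq n)$ on level $n$, user 2 at rate $\Pr(N_2\geq n)$), the receiver successively decodes user 1 and cancels it, and this hits the corner point $(\mathbb{E}[(N_1-N_2)^{+}],\mathbb{E}[N_2])$; the rest of the region follows by swapping roles and time sharing. You instead treat $(Y,N_1,N_2)$ as the output of a memoryless MAC with receiver state, invoke the standard polymatroid capacity theorem, and evaluate the three mutual informations at i.i.d.\ Bernoulli$(1/2)$ inputs; your entropy computations (in particular that the nonzero positions of $Y$ are independent uniform bits because each is a function of a disjoint pair of input bits, giving $H(Y\mid N_1,N_2)=\mathbb{E}[\max(N_1,N_2)]$) are correct, and you get the whole pentagon in one shot together with a full Fano converse. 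What your approach buys is brevity and completeness as a proof of this theorem in isolation; what the paper's approach buys is the explicit layered per-level coding architecture, which is not incidental --- it is reused verbatim in the achievability arguments for every interference-channel sub-class in Section IV, so the constructive corner-point scheme is doing expository work beyond proving the MAC region. The only caveat for your version is that invoking the single-letter MAC theorem implicitly assumes the state process is i.i.d.\ (or requires the standard ergodic extension), a hypothesis the paper leaves at the level of ``ergodic''; this is a presentational gap, not a mathematical one.
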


\begin{proof}
We will describe the achievability here since the converse is straightforward.
We prove the achievability of a corner point given by the rate pair
$(\mathbb{E}[(N_{1}-N_{2})^{+}],\mathbb{E}[N_{2}])$. The capacity region can
then be achieved by interchanging the coding schemes over the two links and by
time sharing.

The first user uses a codebook of rate $\Pr(N_{1}-N_{2}\geq n)$ to
transmit a message at level $n.$ For a given level $n$, the second
user uses a codebook of rate $\Pr(N_{2}\geq n)$ to transmit its
message. Codebooks are independent across layers at both users.
Across all channel states, i.e., on average, the receiver receives
$\mathbb{E}[1_{N_{1}-N_{2}\geq n}]=\Pr(N_{1}-N_{2}\geq n)$ bits
from level \thinspace$n$ of user 1, where we have used the fact
that the expected value of an indicator function of an event is
the probability of that event. The codebook rate of user 1 at this
level therefore allows the receiver to reliably decode the message
of user $1.$ After decoding the messages of user 1 its
contribution from the received signal can be canceled and the
remaining contribution of the second user can be decoded reliably.
Thus,
across all levels, the average transmission rates of $\mathbb{E}[(N_{1}%
-N_{2})^{+}]$ and $\mathbb{E}[N_{2}]$ at users 1 and 2, respectively, enable
reliable communications. \vspace{.1cm} \begin{figure}[ptbh]
\centering
\subfigure[Channel state 1]{
\includegraphics[
trim=1.100000in 1.100000in 2.139081in 1.100000in,
width=1.2in,
]{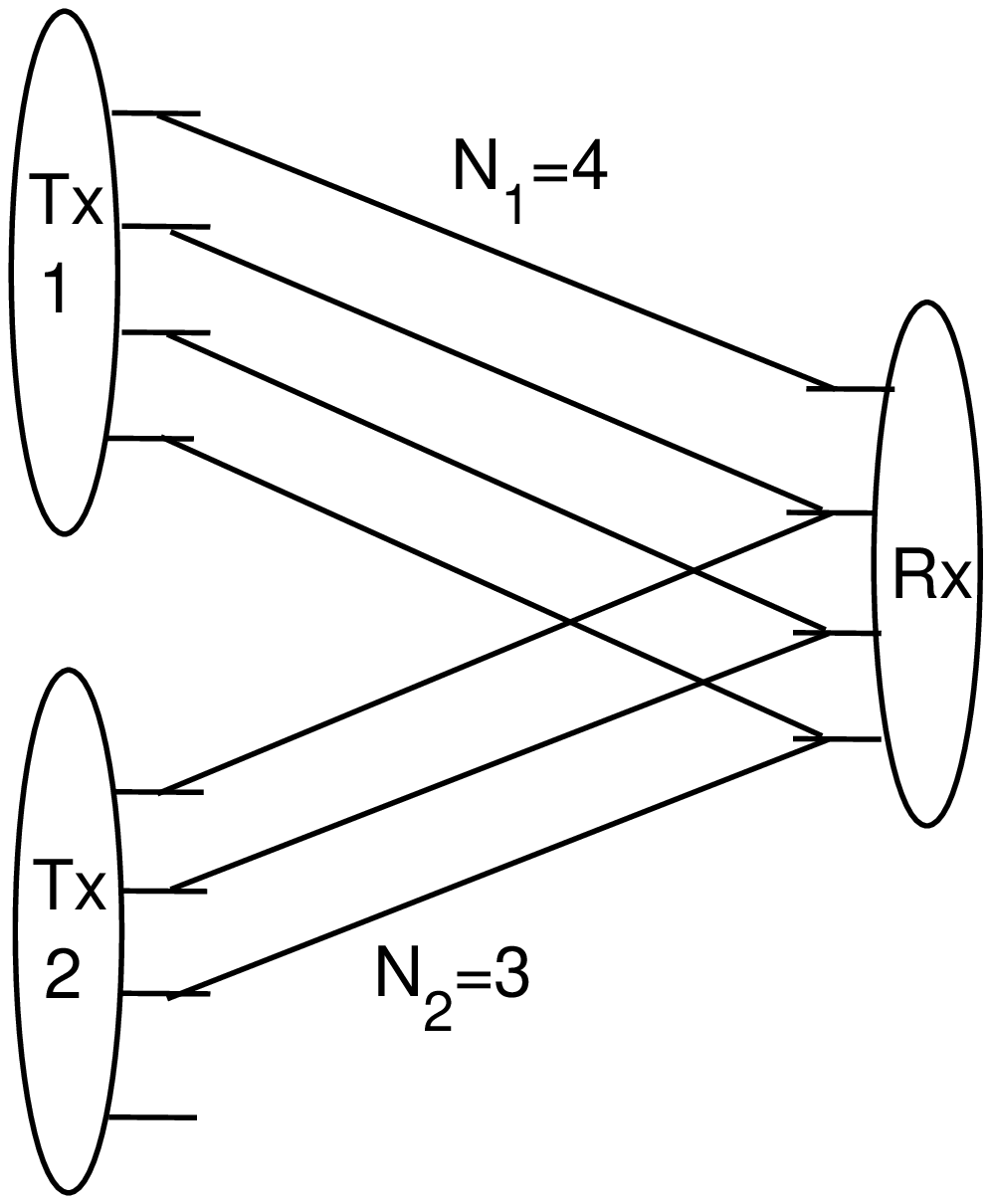}\label{Fig_Ex2}}\hspace{1cm} \subfigure[Channel state 2 ]{
\includegraphics[
trim=1.100000in 1.100000in 2.139081in 1.100000in,
width=1.2in,
]{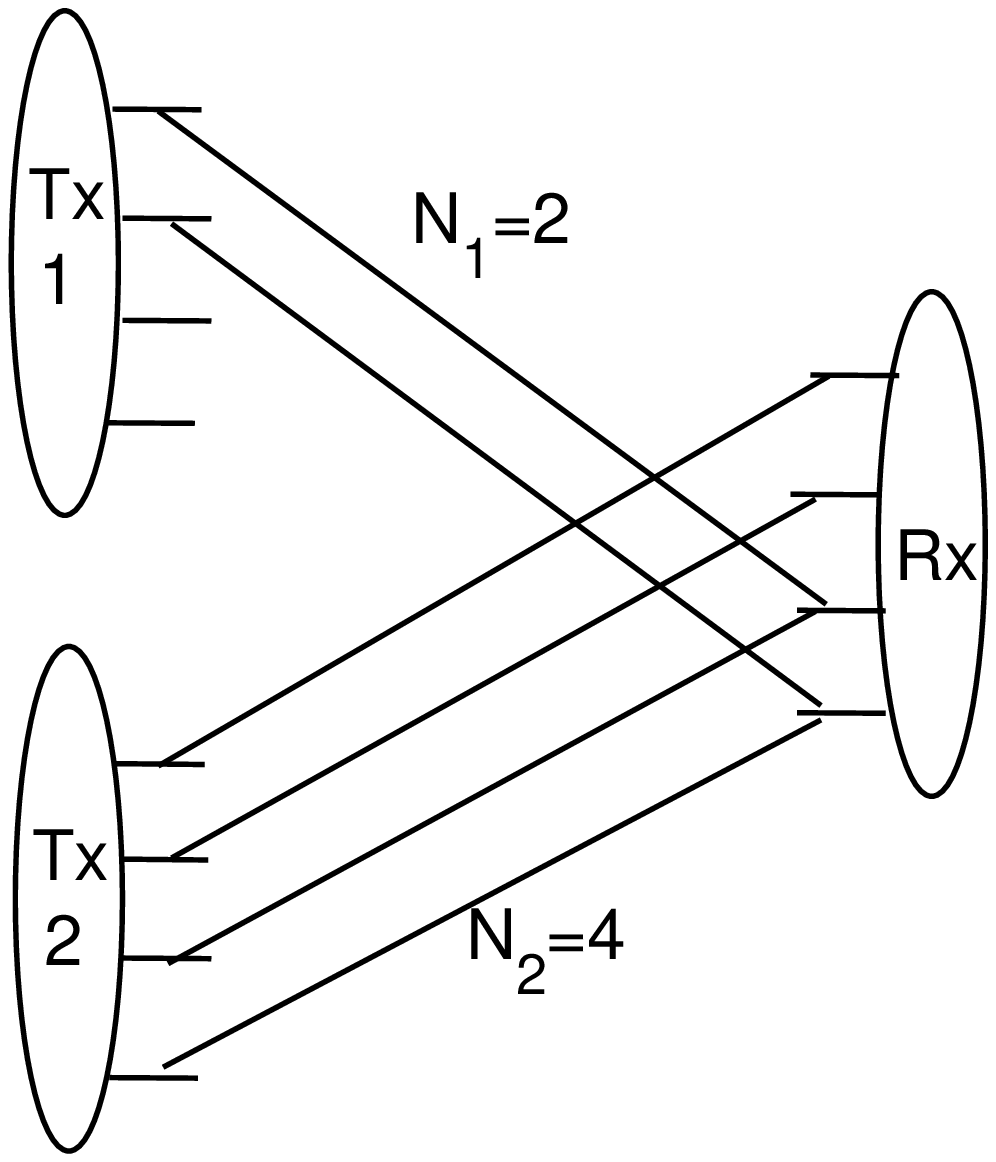}\label{Fig_Ex3}}\caption{The layered MAC in Example
\ref{exmax} in each of the two states}%
\end{figure}
\end{proof}

\begin{example}
\label{exmax} Consider a layered MAC with $q=4$ and two fading
states: the first state with $N_{1}=4$ and $N_{2}=3$ occurs with
probability $p$, and the second state with $N_{1}=2$ and $N_{2}=4$
with probability $1-p$. The above achievability scheme for rate
pair $(p,4-p)$ reduces to the following. At transmitter 1, a rate $p$
code is used on the first level while nothing is transmitted on the
remaining levels. At transmitter 2, a rate 1 code is used at the
top three levels while a rate $1-p$ code is used on the fourth
level. Note that in this case, whenever the channel is in the
first state ($N_{1}=4,N_{2}=3$) the top bit of user 1 reaches the
receiver noiselessly. Hence, the rate $p$ codeword of user 1 can be
decoded from the occurrences of state 1. Thus, the contribution of
the first transmitter can be cancelled by the receiver across both
states. Following this, the receiver uses the top 3 levels of the
second transmitter that are interference-free in both the states
and hence a rate of 1 bit/channel use can be achieved for each of
the three levels. The fourth level reaches the receiver whenever
the system is in the second state ($N_{1}=2,N_{2}=4$) which
happens with probability $1-p$, and thus the codebook of rate $1-p$
can be decoded by the receiver from the occurrences of the second
state.
\end{example}
\vspace{-.1in}

\section{\label{Sec_4}Layered Erasure One-sided IFC}
\vspace{-.05in}

\subsection{Outer Bounds}
\vspace{-.05in}

Outer bounds on the capacity region of a class of deterministic IFCs, of which
the binary expansion deterministic IFC is a sub-class, are developed in
\cite{cap_theorems:CostaElGamal_IC2}. For a time-varying (ergodic)
layered erasure IFC with perfect CSI at the receivers, we follow the
same steps as in \cite[Theorem 1]{cap_theorems:CostaElGamal_IC2} while
including the CSI as a part of the received signal at each receiver. The
following theorem summarizes the outer bounds on the capacity region of
layered erasure one-sided IFCs.

\begin{theorem}
\label{Th_OB}An outer bound of the capacity region of an ergodic layered
erasure one-sided IFC is given by the set of all rate tuples $\left(
R_{1},R_{2}\right)  $ that satisfy
\begin{subequations}
\label{COB}%
\begin{align}
R_{1}  &  \leq\mathbb{E}[N_{11}]\\
R_{2}  &  \leq\mathbb{E}[N_{22}]\\
R_{1}+R_{2}  &  \leq\mathbb{E}[\max(N_{11},N_{22},N_{21},N_{11}+N_{22}%
-N_{21})].
\end{align}

\end{subequations}
\end{theorem}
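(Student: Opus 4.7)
The plan is to adapt the deterministic interference channel outer bound of \cite{cap_theorems:CostaElGamal_IC2} to the ergodic layered setting by treating the state sequence $S^{n}=(N_{11}^{n},N_{21}^{n},N_{22}^{n})$ as extra output available at both receivers, and then translating every resulting conditional entropy into a moment of $\max(\cdot)$ of the $N_{jk}$ via the layered structure of Definition~\ref{Def1}. The individual-rate bounds are standard cut-set: Fano yields $nR_{1}\leq I(X_{1}^{n};Y_{1}^{n}\mid S^{n})+n\epsilon_{n}$, and since $Y_{1,i}$ is determined by $(X_{1,i},N_{11,i})$ and, conditioned on $S_{i}$, has at most $N_{11,i}$ non-trivial coordinates, this is upper bounded by $n\mathbb{E}[N_{11}]$. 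For $R_{2}$ I would hand $X_{1}^{n}$ to receiver~$2$ as genie side information, which cancels the interference and reduces the effective channel to a layered erasure channel of mean level $\mathbb{E}[N_{22}]$.

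For the sum-rate bound, the main step is to introduce the Costa--El~Gamal genie $V_{1,i}\triangleq \mathbf{S}^{q-N_{21,i}}X_{1,i}$, i.e., the interference contribution of user~$1$ at receiver~$2$. Because $Y_{2,i}$ and $V_{1,i}$ are in bijection given $(X_{2,i},S_{i})$ (XOR is its own inverse over $\mathbb{F}_{2}$), one has $H(Y_{2}^{n}\mid X_{2}^{n},S^{n})=H(V_{1}^{n}\mid S^{n})$, so that Fano together with $H(Y_{1}^{n}\mid X_{1}^{n},S^{n})=0$ yields
\[
n(R_{1}+R_{2})\leq H(Y_{1}^{n}\mid S^{n})+H(Y_{2}^{n}\mid S^{n})-H(V_{1}^{n}\mid S^{n})+n\epsilon_{n}.
\]
The chain-rule inequality $H(Y_{1}^{n}\mid S^{n})\leq H(Y_{1}^{n},V_{1}^{n}\mid S^{n})=H(V_{1}^{n}\mid S^{n})+H(Y_{1}^{n}\mid V_{1}^{n},S^{n})$ cancels the negative term and leaves
\[
n(R_{1}+R_{2})\leq H(Y_{2}^{n}\mid S^{n})+H(Y_{1}^{n}\mid V_{1}^{n},S^{n})+n\epsilon_{n}.
\]

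The remaining work is to single-letterize each term by the layered structure. Conditioned on $S_{i}$, the signal $Y_{2,i}$ is supported on at most $\max(N_{21,i},N_{22,i})$ coordinates, so $H(Y_{2}^{n}\mid S^{n})\leq n\mathbb{E}[\max(N_{21},N_{22})]$. Conditioned on $V_{1,i}$ and $S_{i}$, the top $N_{21,i}$ levels of $X_{1,i}$ are revealed, so only $(N_{11,i}-N_{21,i})^{+}$ coordinates of $Y_{1,i}$ remain undetermined; conditioning on the full block $V_{1}^{n}$ can only shrink this, giving $H(Y_{1}^{n}\mid V_{1}^{n},S^{n})\leq n\mathbb{E}[(N_{11}-N_{21})^{+}]$. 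A short pointwise case analysis on the signs of $N_{11}-N_{21}$ and $N_{22}-N_{21}$ then verifies the identity
\[
\max(N_{21},N_{22})+(N_{11}-N_{21})^{+}=\max(N_{11},N_{22},N_{21},N_{11}+N_{22}-N_{21}),
\]
and taking expectations closes the proof.

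The only nontrivial choice is the genie $V_{1}^{n}$; once it is in place, the one-sided structure ($N_{12}\equiv 0$) makes each entropy bound a routine per-letter maximum over the number of bit levels actually carried. The main obstacle I expect is articulating cleanly why the per-letter bound $H(Y_{1,i}\mid V_{1,i},S_{i})\leq (N_{11,i}-N_{21,i})^{+}$ survives conditioning on the full block $V_{1}^{n}$ rather than only on $V_{1,i}$, but this is immediate because conditioning cannot increase entropy and $S^{n}$ already aligns the relevant time indices.
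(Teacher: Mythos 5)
Your proposal is correct and follows essentially the same route as the paper, which proves Theorem~\ref{Th_OB} by invoking the genie-aided argument of \cite[Theorem 1]{cap_theorems:CostaElGamal_IC2} with the state sequence appended to the received signals; your write-up simply fills in the details (the genie $V_{1}$, the cancellation of $H(V_{1}^{n}\mid S^{n})$, and the pointwise identity $\max(N_{21},N_{22})+(N_{11}-N_{21})^{+}=\max(N_{11},N_{22},N_{21},N_{11}+N_{22}-N_{21})$) that the paper leaves implicit. The single-letterization steps and the case analysis all check out.
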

\subsection{Optimality of Outer Bounds}

We now prove the tightness of the sum capacity outer bounds for specific
sub-classes of ergodic layered erasure IFCs. For the very strong sub-class,
the achievable scheme also achieves the capacity region. For the remaining
sub-classes, we achieve a corner point of the capacity region.

\subsubsection{Very Strong IFC}

\begin{theorem}
\label{vs} For a class of very strong layered erasure IFCs for which
$N_{21}\geq N_{11}+N_{22}$ holds with probability 1, the sum capacity is
$\mathbb{E}[N_{11}+N_{22}]$ and the capacity region is given by $R_{1}%
\leq\mathbb{E}[N_{11}]$ and $R_{2}\leq\mathbb{E}[N_{22}]$.
\end{theorem}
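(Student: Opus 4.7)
The plan is to establish the converse directly from the individual rate bounds of Theorem \ref{Th_OB} (which are also the trivial point-to-point cutset bounds) and to establish achievability by showing that, under the very strong condition, each user can simultaneously achieve its interference-free rate. Specifically, I would show that the rate pair $(\mathbb{E}[N_{11}],\mathbb{E}[N_{22}])$ is achievable, which automatically yields all rate pairs in the claimed rectangle by a time-sharing/rate-splitting argument, and also pins down the sum capacity as $\mathbb{E}[N_{11}+N_{22}]$.

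For achievability, I would borrow the per-layer independent coding idea from the proof of the MAC theorem. User 1 puts an independent codebook of rate $\Pr(N_{11}\ge n)$ on level $n$, giving total rate $\mathbb{E}[N_{11}]$; user 2 similarly uses rate $\Pr(N_{22}\ge n)$ on its level $n$, giving total rate $\mathbb{E}[N_{22}]$. Receiver 1 is trivial because $N_{12}=0$ reduces its channel to a single-user layered erasure link, and the MAC-style per-layer argument shows it reliably decodes $X_1$. The interesting receiver is receiver 2.

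The key observation for receiver 2 is the shift structure of (\ref{Y_IFC}): the top $N_{21}-N_{22}$ received levels at receiver 2 depend only on $X_1$ (they lie above the shifted $X_2$ contribution), while the bottom $N_{22}$ received levels contain the XOR of the two signals. The very strong hypothesis $N_{21}\ge N_{11}+N_{22}$ a.s.\ gives $N_{21}-N_{22}\ge N_{11}$ a.s., so the inclusion of events $\{N_{11}\ge n\}\subseteq\{N_{21}-N_{22}\ge n\}$ holds for every $n$. Consequently, every level of $X_1$ that receiver 1 sees, receiver 2 also sees interference-free, and at least as often. Thus the same layer-$n$ codebook of rate $\Pr(N_{11}\ge n)$ is reliably decodable at receiver 2 by the same MAC-style argument used in Theorem 1 (applied to the clean portion of $Y_2$). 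After decoding $X_1$, receiver 2 XORs it out of $Y_2$ and is left with $\mathbf{S}^{q-N_{22}}X_2^{q}$, which is a single-user layered erasure channel; reliably decoding $X_2$ at rate $\mathbb{E}[N_{22}]$ is then immediate from the same per-layer scheme.

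For the converse, I would invoke Theorem \ref{Th_OB} directly: the bounds $R_1\le\mathbb{E}[N_{11}]$ and $R_2\le\mathbb{E}[N_{22}]$ hold unconditionally, and summing them gives $R_1+R_2\le\mathbb{E}[N_{11}+N_{22}]$, matching the achievable point. (Note that the third outer bound in (\ref{COB}) is loose here, since under the very strong hypothesis the maximum inside the expectation equals $N_{21}\ge N_{11}+N_{22}$, but this does not matter because the individual bounds already dominate.) The only real obstacle is the per-layer decoding at receiver 2 in the ergodic setting: one must argue that independent layer codebooks chosen only from fading statistics are still decodable when the fraction of interference-free receptions at receiver 2 dominates the fraction required by user 1's codebook rate. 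This follows from the event inclusion above combined with the standard random-coding/typicality argument used in Theorem 1, so no new machinery is needed.
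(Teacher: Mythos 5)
Your proposal is correct and follows essentially the same route as the paper: per-layer independent codebooks of rates $\Pr(N_{11}\ge n)$ and $\Pr(N_{22}\ge n)$, with receiver 2 acting as a multiaccess receiver that first decodes user 1 on its interference-free top levels (your event inclusion $\{N_{11}\ge n\}\subseteq\{N_{21}-N_{22}\ge n\}$ is exactly the paper's identity $\Pr(\min(N_{11},(N_{21}-N_{22})^{+})\ge n)=\Pr(N_{11}\ge n)$), then cancels and decodes user 2, with the converse taken from the individual rate bounds of Theorem \ref{Th_OB}. No substantive differences.
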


\begin{proof}
Consider the following achievable scheme: at level $n$, the first
user uses a codebook of rate
$\Pr(\min(N_{11},(N_{21}-N_{22})^{+})\geq n)=\Pr(N_{11}\geq n)$,
i.e., at each level, the first user transmits at the erasure rate
supported by that level at the receiver. On the other hand, at
level $n$, the second user uses a codebook of rate $\Pr(N_{22}\geq
n)$ to transmit its message. At both users, encoding is
independent across layers. The message of user $1$ can be reliably
decoded at receiver $1$ and the average rate achieved is
\begin{equation}
R_{1}=\sum\limits_{n=1}^{q}\Pr(N_{11}\geq n)=\sum_{n=1}^{q}\overline{F}_{N_{11}%
}(n)=\mathbb{E}[N_{11}].
\end{equation}
The second receiver acts like a multi-access receiver and at each
level, it first decodes the message of user $1$. Thus, across all
channel states, it can, on average, reliably decode the message
from level $n$ at a rate $\Pr(\min(N_{11},(N_{21}-N_{22})^{+})\geq
n)$. After decoding all levels of user $1$, receiver $2$
eliminates the contribution of user $1$ from its received signal
thereby decoding the messages from user $2$ interference-free at
an average rate of $\mathbb{E}[N_{22}]$.
\end{proof}

\subsubsection{Strong but not Very Strong IFC}

\begin{theorem}
The sum capacity of a class of very strong layered erasure IFCs for which
$N_{11}\leq N_{21}\leq N_{11}+N_{22}$ with probability 1 is $\mathbb{E}%
[\max(N_{21},N_{22})]$.
\end{theorem}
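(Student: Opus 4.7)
My plan has two parts: the converse follows from Theorem~\ref{Th_OB} after a short algebraic simplification, and achievability reduces to applying the layered MAC scheme of Theorem~1 at receiver~2 together with a compatibility check at receiver~1.

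For the converse, I would observe that under the hypothesis $N_{11}\le N_{21}\le N_{11}+N_{22}$ (holding with probability one), the four arguments $N_{11}$, $N_{22}$, $N_{21}$, and $N_{11}+N_{22}-N_{21}$ inside the maximum in Theorem~\ref{Th_OB} are each pointwise dominated by $\max(N_{21},N_{22})$: the first by $N_{21}$, the second and third trivially, and the fourth by $N_{22}$ using $N_{11}-N_{21}\le 0$. Taking expectations collapses the bound to $R_1+R_2\le\mathbb{E}[\max(N_{21},N_{22})]$.

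For achievability I would target the corner point $(R_1,R_2)=(\mathbb{E}[(N_{21}-N_{22})^+],\mathbb{E}[N_{22}])$, whose coordinates sum to $\mathbb{E}[\max(N_{21},N_{22})]$. The transmission scheme is exactly the one from Theorem~1 applied at receiver~2 viewed as a two-user layered erasure MAC with fading pair $(N_{21},N_{22})$: user~1 uses a rate-$\Pr(N_{21}-N_{22}\ge n)$ codebook on level~$n$, user~2 uses a rate-$\Pr(N_{22}\ge n)$ codebook on level~$n$, and codebooks are independent across layers. Theorem~1 then delivers successive cancellation decoding, user~1 followed by user~2, at receiver~2.

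The step I expect to be the main thing to verify --- and the only place where the SnVS hypothesis is really used beyond the converse --- is that the same user~1 codebook is still decodable at the intended receiver~1. Since $N_{21}\le N_{11}+N_{22}$ forces $(N_{21}-N_{22})^+\le N_{11}$ pointwise, the event $\{N_{21}-N_{22}\ge n\}$ is contained in $\{N_{11}\ge n\}$ for every $n\ge 1$, so $\Pr(N_{21}-N_{22}\ge n)\le\Pr(N_{11}\ge n)$. Hence user~1's per-level rate never exceeds the ergodic erasure capacity of the direct link at level~$n$, and interference-free decoding at receiver~1 succeeds layer by layer. Summing then gives $R_1+R_2=\mathbb{E}[\max(N_{21},N_{22})]$, matching the converse.
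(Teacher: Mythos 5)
Your proposal is correct and takes essentially the same route as the paper, which omits this proof as ``very similar to that of Theorem~\ref{vs}'': under the SnVS hypothesis the level-$n$ rate $\Pr(\min(N_{11},(N_{21}-N_{22})^{+})\geq n)$ used there reduces exactly to your $\Pr(N_{21}-N_{22}\geq n)$, and receiver~2 performs the same successive-cancellation MAC decoding. Your explicit check that $N_{21}\leq N_{11}+N_{22}$ guarantees decodability at receiver~1 is precisely the detail the paper leaves implicit.
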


\begin{proof}
The proof is very similar to that of Theorem \ref{vs}, and is hence omitted.
\end{proof}

\subsubsection{Strong IFC}

For the two sub-classes considered thus far, it sufficed to use independent
coding across the layers. However, for the sub-class with a mix of SnVS and VS
states, joint coding across the layers is required as shown in the following theorem.

\begin{theorem}
If $N_{21}\geq N_{11}$ with probability 1, the sum capacity is given by
$\min(\mathbb{E}[N_{11}+N_{22}],\mathbb{E}[\max(N_{21},N_{22})])$.
\end{theorem}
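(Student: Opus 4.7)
The plan is to derive the converse by specializing Theorem~\ref{Th_OB} under the hypothesis $N_{21}\ge N_{11}$, and to establish achievability with a joint-codebook scheme in which receiver~$2$ acts as a two-user MAC decoder while receiver~$1$ decodes only its own message.

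For the converse, $N_{21}\ge N_{11}$ with probability one gives $\max(N_{11},N_{21})=N_{21}$ and $N_{11}+N_{22}-N_{21}\le N_{22}$, so the sum bound in \eqref{COB} collapses to $R_1+R_2\le \mathbb{E}[\max(N_{21},N_{22})]$. Combining this with the per-user bounds $R_1\le \mathbb{E}[N_{11}]$ and $R_2\le \mathbb{E}[N_{22}]$ yields $R_1+R_2\le \min(\mathbb{E}[N_{11}+N_{22}],\mathbb{E}[\max(N_{21},N_{22})])$.

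For achievability, I would let each user~$i$ draw $2^{nR_i}$ i.i.d.\ uniform codewords of length $nq$, treating the codebook as \emph{joint} across all $q$ layers rather than a collection of independent per-layer codebooks as in the proofs of Theorem~\ref{vs} and the preceding theorem. Receiver~$1$ sees a point-to-point layered erasure channel of capacity $\mathbb{E}[N_{11}]$, so any rate $R_1\le \mathbb{E}[N_{11}]$ is reliably decodable. Receiver~$2$ sees a layered erasure MAC with states $(N_{21},N_{22})$; by Theorem~1, joint typicality decoding recovers both messages whenever $R_1\le \mathbb{E}[N_{21}]$, $R_2\le \mathbb{E}[N_{22}]$, and $R_1+R_2\le \mathbb{E}[\max(N_{21},N_{22})]$, and the first constraint is slack because $\mathbb{E}[N_{21}]\ge \mathbb{E}[N_{11}]$. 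Maximizing $R_1+R_2$ over the remaining constraints recovers exactly the claimed sum rate.

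The main obstacle is to justify why joint coding is genuinely necessary and why it suffices. The alignment between user~$1$'s and user~$2$'s bits at receiver~$2$ is state-dependent---it differs between SnVS and VS realizations---so no fixed per-layer rate allocation of the Theorem~\ref{vs} type can attain $\mathbb{E}[\max(N_{21},N_{22})]$ across a mixture of the two regimes, and this is precisely the failure mode being avoided. To close the gap, I would invoke Theorem~1 as a statement about the MAC capacity region rather than about a specific construction: since uniform inputs are entropy-maximizing for each XOR-shift-erasure link, that region is attainable by i.i.d.\ uniform joint codebooks and joint typicality decoding, independently of the per-layer decomposition used in its achievability proof.
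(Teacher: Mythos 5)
Your converse is the same as the paper's (both reduce to specializing Theorem~\ref{Th_OB}, noting that $N_{21}\ge N_{11}$ forces $\max(N_{11},N_{22},N_{21},N_{11}+N_{22}-N_{21})=\max(N_{21},N_{22})$), but your achievability argument is genuinely different. The paper gives an explicit two-case construction: when $\mathbb{E}[(N_{21}-N_{22})^{+}]\le\mathbb{E}[N_{11}]$, user~$1$ spreads a \emph{single} codebook of total rate $\mathbb{E}[(N_{21}-N_{22})^{+}]$ uniformly over all $q$ layers while user~$2$ codes per layer at rate $\Pr(N_{22}\ge n)$; receiver~$2$ first decodes user~$1$ from its interference-free top $(N_{21}-N_{22})^{+}$ layers, cancels it, and then decodes user~$2$, with the roles of $\mathbb{E}[(N_{21}-N_{22})^{+}]$ and $\mathbb{E}[N_{11}]$ swapped in the other case. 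You instead run a compound-MAC-style argument: a common i.i.d.\ uniform joint codebook over $\mathbb{F}_2^q$, point-to-point decoding at receiver~$1$ and joint typicality MAC decoding at receiver~$2$, then intersect $R_1\le\mathbb{E}[N_{11}]$ with receiver~$2$'s region from Theorem~1. Your route is cleaner and avoids the paper's somewhat informal ``receives on average $X$ bits and is thus able to decode'' reasoning, but it leans on two facts you assert rather than prove: (i) that the region of Theorem~1 is achievable by i.i.d.\ uniform joint codebooks with joint typicality decoding (true by the standard MAC coding theorem with receiver CSI, since $I(X_1,X_2;Y\mid N_{21},N_{22})=\mathbb{E}[\max(N_{21},N_{22})]$ under uniform inputs, but not what Theorem~1's per-layer proof actually constructs), and (ii) that a single codebook pair can be simultaneously reliable at both receivers (standard via averaging the two error probabilities over the ensemble, but worth stating). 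The paper's scheme, by contrast, is fully operational and isolates the structural insight flagged before the theorem---that a mixture of SnVS and VS states forces coding \emph{jointly across layers} for user~$1$---which your generic random-coding argument achieves implicitly but does not exhibit.
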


\begin{proof}
Let $\mathbb{E}[(N_{21}-N_{22})^{+}]\leq\mathbb{E}[N_{11}]$. In this case, the
first user forms a codebook of rate $\mathbb{E}[(N_{21}-N_{22})^{+}]/q$. The
transmitter at level $n$ sends data from this codebook while the second user
at level $n$ uses a codebook of rate $\Pr(N_{22}\geq n)$ to transmit the data.

The decoding scheme proceeds as follows. The first receiver
receives across all channel states, i.e., on average,
$\sum_{n=1}^{q}\mathbb{E}[1_{N_{11}\geq
n}]=\sum_{n=1}^{q}\Pr(N_{11}\geq n)=\mathbb{E}[N_{11}]$ bits from
all the levels of user 1 and is thus able to decode data at the
lower rate of $\mathbb{E}[(N_{21}-N_{22})^{+}]/q$. Similarly, the
second receiver receives across all channel states, i.e., on
average, $\sum_{n=1}^{q}\mathbb{E}[1_{N_{21}-N_{22}\geq
n}]=\sum_{n=1}^{q}\Pr(N_{21}-N_{22}\geq
n)=\mathbb{E}[(N_{21}-N_{22})^{+}]$ bits reliably from all the
levels of user 1 and is thus able to decode. After decoding user
$1$, receiver $2$ eliminates the contribution of user $1$ from its
received signal thereby decoding the
messages from user $2$ interference-free at an average rate of $\mathbb{E}%
[N_{22}]$.

One can proceed similarly for $\mathbb{E}[(N_{21}-N_{22})^{+}]\geq\mathbb{E}[N_{11}]$. In this
case, the first user forms a codebook of rate $\mathbb{E}[N_{11}]/q$ and the
same strategy achieves the sum capacity.
\end{proof}

\subsubsection{Ergodic Very Strong IFC}

More generally, one can also consider the sub-class of IFCs with a mix of all
types of sub-channels, i.e., a mix of weak, SnVS, and VS. In the
following theorem we develop the sum capacity for subset of such a sub-class
in which on average the conditions for very strong are satisfied.

\begin{theorem}
If $\mathbb{E}[\max(N_{21},N_{22})]\geq\mathbb{E}[N_{{11}}+N_{22}]$, then the sum
capacity is $\mathbb{E}[N_{11}+N_{22}]$.
\end{theorem}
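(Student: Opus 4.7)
The matching upper bound $R_1 + R_2 \leq \mathbb{E}[N_{11}] + \mathbb{E}[N_{22}]$ is immediate by summing the two individual rate bounds in Theorem~\ref{Th_OB}, and because that theorem also enforces $R_i \leq \mathbb{E}[N_{ii}]$ for each user, the only rate pair meeting the sum bound is the single corner $(R_1, R_2) = (\mathbb{E}[N_{11}], \mathbb{E}[N_{22}])$. The plan is therefore to exhibit an achievable scheme hitting exactly this corner.

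The strategy is to let receiver~1 treat its observation as a single-user layered erasure channel---there is no interference because the one-sided model has $N_{12} \equiv 0$---and to let receiver~2 act as the two-user layered erasure MAC receiver analysed in Section~\ref{Sec_3}. Each transmitter uses a single random codebook over $\mathbb{F}_2^{qT}$, coded jointly across the $q$ levels, with user~$k$ at rate $\mathbb{E}[N_{kk}]$. Receiver~1's task is then the $N_2 \equiv 0$ specialisation of the MAC theorem of Section~\ref{Sec_3} and succeeds at rate $\mathbb{E}[N_{11}]$. Receiver~2 jointly decodes both users, and this succeeds once I verify that the target rate pair lies inside the MAC capacity region of Section~\ref{Sec_3} applied with states $(N_{21}, N_{22})$.

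That inclusion is the only new calculation. The bound $R_2 \leq \mathbb{E}[N_{22}]$ is trivial, and $R_1 + R_2 \leq \mathbb{E}[\max(N_{21}, N_{22})]$ is exactly the hypothesis of the theorem. The remaining constraint $R_1 \leq \mathbb{E}[N_{21}]$ follows by rewriting $\mathbb{E}[\max(N_{21}, N_{22})] = \mathbb{E}[N_{22}] + \mathbb{E}[(N_{21} - N_{22})^+]$, which recasts the hypothesis as $\mathbb{E}[(N_{21} - N_{22})^+] \geq \mathbb{E}[N_{11}]$, and observing that $\mathbb{E}[N_{21}] \geq \mathbb{E}[(N_{21} - N_{22})^+] \geq \mathbb{E}[N_{11}]$.

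The main obstacle I expect is justifying why joint coding across the $q$ layers is needed here, in contrast to the layer-by-layer construction used in Theorem~\ref{vs}. If user~1 instead sent at rate $\Pr(N_{11} \geq n)$ on level~$n$, successive layer decoding at receiver~2 would require the pointwise inequality $\Pr(N_{11} \geq n) \leq \Pr(N_{21} - N_{22} \geq n)$ at every level, which is strictly stronger than its average counterpart guaranteed by the hypothesis and can fail in this ergodic regime. Joint coding across the $q$ bit levels---the same device already invoked for the strong IFC in the previous theorem---bypasses this obstruction: the MAC region of Section~\ref{Sec_3} is attained by random codes on the full $q$-bit alphabet, and any such rate-$\mathbb{E}[N_{11}]$ codebook simultaneously achieves capacity on the single-user layered erasure channel seen by receiver~1.
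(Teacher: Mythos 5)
Your proposal is correct, and its core construction---user 1 encoding jointly across the $q$ layers with a single codebook of total rate $\mathbb{E}[N_{11}]$, with receiver 2 decoding the interference---is exactly the paper's. The two arguments diverge only in how receiver 2's decoding is justified. The paper keeps per-level codebooks of rate $\Pr(N_{22}\geq n)$ at user 2 and has receiver 2 perform successive cancellation: it first decodes user 1 from the portion of user 1's signal arriving above user 2's, which carries $\mathbb{E}[(N_{21}-N_{22})^{+}]\geq\mathbb{E}[N_{11}]$ bits on average (this inequality being precisely your restatement of the hypothesis via $\mathbb{E}[\max(N_{21},N_{22})]=\mathbb{E}[N_{22}]+\mathbb{E}[(N_{21}-N_{22})^{+}]$), then strips it and decodes user 2 interference-free. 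You instead encode user 2 jointly as well and invoke the MAC region of Section~\ref{Sec_3} at receiver 2 with simultaneous decoding, verifying the three pentagon constraints. Your route is more modular, but note that Section~\ref{Sec_3}'s achievability is only argued for corner points via per-level coding, successive decoding, and time sharing; to stay within what the paper actually proves you should observe that your target pair $(\mathbb{E}[N_{11}],\mathbb{E}[N_{22}])$ is dominated componentwise by the corner $(\mathbb{E}[(N_{21}-N_{22})^{+}],\mathbb{E}[N_{22}])$, so successive decoding with user 1 first suffices and no time sharing is needed---at which point your argument collapses onto the paper's. Your closing discussion of why per-level coding for user 1 fails (the pointwise inequality $\Pr(N_{11}\geq n)\leq\Pr(N_{21}-N_{22}\geq n)$ need not hold even when its averaged version does) correctly identifies the reason joint coding across layers is essential for this sub-class.
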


\begin{proof}
The first user forms a codebook of rate $\mathbb{E}[N_{11}]/q$.
The transmitter at level $n$ sends data from this codebook while
the second user at level $n$ uses a codebook of rate
$\Pr(N_{22}\geq n)$ to transmit the data. The first receiver
receives across all channel states, i.e., on average, it receives
$\mathbb{E}[N_{11}]$ bits from all the levels of user 1 and is
thus able to decode. Similarly, the second receiver receives
across all channel states, i.e., on average, it receives
$\mathbb{E}[(N_{21}-N_{22})^{+}]\ge \mathbb{E}[N_{11}]$ bits from
all the levels of user 1 and is thus able to decode. After
decoding user $1$, receiver $2$ eliminates the contribution of
user $1$ from its received signal thereby decoding the messages
from user $2$ interference-free at an average rate of
$\mathbb{E}[N_{22}]$.
\end{proof}

\subsubsection{Weak IFC}

\begin{theorem}
If $N_{21}\leq N_{11}$ with probability 1, then the sum capacity is $\mathbb{E}%
(\max(N_{11},N_{11}+N_{22}-N_{21}))$.
\end{theorem}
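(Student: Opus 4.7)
My plan is first to observe that the weak-interference hypothesis collapses the sum-rate outer bound of Theorem~\ref{Th_OB} to exactly the claimed value, and then to exhibit a matching layer-independent achievable scheme that treats the residual interference at receiver~2 as erasure.

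For the outer bound I would check that when $N_{21}\le N_{11}$ with probability~1, the four-term maximum in Theorem~\ref{Th_OB} simplifies: the inequality $N_{11}\ge N_{21}$ drops the $N_{21}$ term, and the same inequality gives $N_{11}+N_{22}-N_{21}\ge N_{22}$, dropping the $N_{22}$ term. What remains is $\max(N_{11},\,N_{11}+N_{22}-N_{21}) = N_{11}+(N_{22}-N_{21})^+$. Taking expectations reproduces $\mathbb{E}[\max(N_{11},N_{11}+N_{22}-N_{21})]$, matching the claimed sum capacity; the entire remaining task is therefore achievability.

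For achievability, the key structural observation from the channel model~(\ref{Y_IFC}) is that the level-$n$ bit of $X_2$ lands at position $q-N_{22}+n$ of $Y_2^q$, while $X_1$'s contribution at receiver~2 occupies positions $q-N_{21}+1,\ldots,q$. Hence $X_{2,n}$ is received interference-free precisely when $n\le N_{22}-N_{21}$. I would let user~1 encode independently on each level $n$ with a uniform random code of rate $\Pr(N_{11}\ge n)$, so that receiver~1, seeing $X_1$ alone, reliably decodes and attains $R_1=\sum_n\Pr(N_{11}\ge n)=\mathbb{E}[N_{11}]$ by the same MAC-type argument used in Theorem~\ref{vs}. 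For user~2 I would use independent uniform random codes on each level $n$ at rate $\Pr(N_{22}-N_{21}\ge n)$. At receiver~2 the induced ergodic per-level channel for user~2 is then a binary erasure channel: with probability $\Pr(N_{22}-N_{21}\ge n)$ the bit $X_{2,n}$ arrives cleanly; otherwise it is either erased ($n>N_{22}$) or XOR'd with a uniform independent bit of $X_1$, contributing no mutual information about $X_{2,n}$ and thus acting as an erasure. Summing the per-layer rates via identity~(\ref{EN_FN}) applied to $(N_{22}-N_{21})^+$ gives $R_2=\mathbb{E}[(N_{22}-N_{21})^+]$, which together with $R_1=\mathbb{E}[N_{11}]$ meets the outer bound with equality.

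The step I anticipate needing the most care is the erasure-equivalence at receiver~2: an interfered bit is not literally erased, so I must justify that under the i.i.d.\ uniform random codebooks the $X_1$-corrupted observations carry no information about $X_{2,n}$. This is a standard ergodic random-coding argument, but it has to be invoked here in place of the explicit successive-cancellation decoding used in the strong and very-strong sub-cases earlier in the paper, and it is exactly this substitution that allows purely layer-independent coding to close the bound in the weak regime.
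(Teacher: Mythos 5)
Your proposal is correct and follows essentially the same route as the paper: independent per-level coding with user~1 at rate $\Pr(N_{11}\geq n)$ and user~2 at rate $\Pr(N_{22}-N_{21}\geq n)$, with receiver~2 treating the levels corrupted by user~1 as erasures, summing to $\mathbb{E}[N_{11}]+\mathbb{E}[(N_{22}-N_{21})^{+}]=\mathbb{E}[\max(N_{11},N_{11}+N_{22}-N_{21})]$. You merely make explicit two points the paper leaves implicit --- the collapse of the four-term outer bound under $N_{21}\leq N_{11}$ and the justification that interfered (as opposed to erased) positions can be discarded without rate loss --- both of which are sound.
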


\begin{proof}
Consider the following achievable scheme: at level $n$, the first
user uses a codebook of rate $\Pr(N_{11}\geq n)$, i.e., at each
level, the first user transmits at the erasure rate supported by
that level at its receiver. On the other hand, at level $n$, the
second user uses a codebook of rate $\Pr (N_{22}-N_{21}\geq n)$ to
transmit its message. The second receiver receives across all
channel states, i.e., on average, it receives $\mathbb{E}[1_{N_{22}-N_{21}\geq
n}]=\Pr(N_{22}-N_{21}\geq n)$ bits reliably from all level $n$ of
user 2 and is thus able to decode.
\end{proof}

\subsubsection{Mixed IFC}

\begin{theorem}
\label{Th_Mixed}For a layered erasure one-sided IFC, the following sum rate
can be achieved:%
\begin{equation}
\mathbb{E}[N_{22}]+\sum_{n=1}^{q}(\Pr(N_{11}\geq n)-\Pr(N_{21}\geq n,N_{21}%
-N_{22}<n))^{+}. \label{Mixed_SR}%
\end{equation}

\end{theorem}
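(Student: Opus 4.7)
The plan is to mirror the per-layer independent coding strategy used in the preceding proofs. Let user $1$ at layer $n$ use an independent codebook of rate $r_n = (\Pr(N_{11}\geq n)-\Pr(N_{21}\geq n, N_{21}-N_{22}<n))^{+}$, and let user $2$ at layer $n$ use an independent codebook of rate $\Pr(N_{22}\geq n)$, so that the per-channel-use sum rate equals exactly the expression in (\ref{Mixed_SR}).

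Decoding at receiver $1$ is immediate: since $r_n \leq \Pr(N_{11}\geq n)$ by construction, the counting argument of the MAC capacity proof (Theorem~1) applied layer by layer shows that every layer of user $1$ is reliably decodable at receiver $1$. For receiver $2$, the term $\Pr(N_{21}\geq n, N_{21}-N_{22}<n)$ is precisely the probability that layer $n$ of user $1$ is received at receiver $2$ \emph{superimposed} on some bit of user $2$, while its complement relative to $\Pr(N_{21}\geq n)$, namely $\Pr(N_{21}-N_{22}\geq n)$, is the probability that layer $n$ of user $1$ appears \emph{cleanly} at receiver $2$, above the support of user $2$. Receiver $2$ then performs successive interference cancellation: starting from the top of each received vector and working down, bits that sit alone in a given position are decoded first, and these recovered bits of user $1$ are used to subtract user $1$'s contribution from the overlapping positions, allowing user $2$ to be decoded at its full per-layer rate $\Pr(N_{22}\geq n)$ for a total of $\mathbb{E}[N_{22}]$.

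The hard part will be making the receiver-$2$ decoding rigorous, because the layer of user $2$ that collides with layer $n$ of user $1$ depends on the random shift $N_{21}-N_{22}$ and therefore varies from one channel use to another. The cleanest route I see is a bit-level counting argument analogous to those in the Strong and Ergodic Very Strong proofs: for each layer $n$ with $r_n>0$, the average number of bit-positions at receiver $2$ from which layer $n$ of user $1$ can be peeled --- clean occurrences, plus overlapping occurrences in which the colliding user-$2$ bit has already been recovered --- must equal $r_n$ per channel use. The identity $r_n + \Pr(N_{21}\geq n, N_{21}-N_{22}<n) = \Pr(N_{11}\geq n)$, which holds whenever the $(\cdot)^{+}$ is not binding, is exactly what makes this balance: the receiver-$1$ and receiver-$2$ decoders of user $1$ draw from the same per-layer budget of reliable observations, with the overlap cost at receiver $2$ absorbed out of the receiver-$1$ rate. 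The $(\cdot)^{+}$ truncation handles the remaining layers, where user $1$'s rate at receiver $1$ is too small to absorb the overlap cost; on those layers user $1$ simply stays silent, and the decoding at receiver $2$ proceeds as if user $1$ were absent on that layer.
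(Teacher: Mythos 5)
Your rate bookkeeping matches (\ref{Mixed_SR}), but the scheme itself has a genuine gap: you place the penalty term $\Pr(N_{21}\geq n,N_{21}-N_{22}<n)$ on user $1$'s rate and then ask receiver $2$ to \emph{decode} user $1$'s layer-$n$ codeword in order to cancel it. Lowering user $1$'s rate to $r_n$ does not make that codeword decodable at receiver $2$, because $r_n$ is calibrated against receiver $1$'s erasure process $N_{11}$, whereas receiver $2$'s clean observations of layer $n$ occur with probability $\Pr(N_{21}-N_{22}\geq n)$ --- an unrelated quantity that can be zero while $r_n>0$. Concretely, take $q=1$ with $(N_{11},N_{21},N_{22})=(1,0,1)$ w.p.\ $3/4$ and $(1,1,1)$ w.p.\ $1/4$. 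Then $r_1=3/4>0$ and $\Pr(N_{21}-N_{22}\geq 1)=0$: receiver $2$ sees $X_{1,1}$ only inside the collisions $X_{1,1}\oplus X_{2,1}$, so your peeling order never gets started, and joint decoding of both layer-$1$ codebooks at receiver $2$ would require $r_1+\Pr(N_{22}\geq1)=7/4\leq\mathbb{E}[\max(N_{21},N_{22})]=1$, which fails. The ``balance identity'' you invoke, $r_n+\Pr(N_{21}\geq n,N_{21}-N_{22}<n)=\Pr(N_{11}\geq n)$, equates budgets at the two receivers that are in fact governed by different random variables, so it cannot rescue the argument.

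The paper's proof distributes the penalty the other way. User $1$ transmits at the \emph{full} per-layer erasure rate $\Pr(N_{11}\geq n)$, but only on the set $\mathcal{I}_1$ of layers where $\Pr(N_{11}\geq n)\geq\Pr(N_{21}\geq n,N_{21}-N_{22}<n)$, staying silent elsewhere; receiver $1$ decodes as usual. Receiver $2$ never attempts to decode user $1$: the collided positions are simply written off as additional erasures. Since the layer of user $2$ that gets hit depends on the random shift $N_{21}-N_{22}$ and varies across channel uses, user $2$ cannot use independent per-layer codebooks at rate $\Pr(N_{22}\geq n)$ as you propose; instead it encodes one message jointly across all layers at the reduced total rate $\mathbb{E}[N_{22}]-\sum_{n\in\mathcal{I}_1}\Pr(N_{21}\geq n,N_{21}-N_{22}<n)$, which matches the average number of uncorrupted received bits. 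Summing the two rates recovers (\ref{Mixed_SR}). The moral is that in this regime the interference is handled by treating it as erasure at receiver $2$ (with user $2$ backing off and coding across layers), not by successive cancellation.
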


\begin{proof}
The transmission scheme is as follows: user $1$ transmits on a subset
$\mathcal{I}_{1}$ of levels on which it is more likely to be received at its
intended receiver than it is to interfere with user $2$, i.e.,
\begin{multline*}
\mathcal{I}_{1}=\{n\in\lbrack1,q]:\Pr(N_{11}\geq n)\geq\\
\Pr(N_{21}\geq n,N_{21}-N_{22}<n)\}.
\end{multline*}
Furthermore, user 1 transmits at level $n$ (independent coding across levels)
using a codebook of rate $\Pr(N_{11}\geq n)$ for $n\in\mathcal{I}_{1}$ and
does not transmit on the remaining levels such that%
\begin{equation}
R_{1}=\sum_{n\in\mathcal{I}_{1}}\Pr(N_{11}\geq n).
\end{equation}
Since user $1$ is transmitting at the erasure rate for any level
$n\in\mathcal{I}_{1}$, receiver 1 can decode the data of the first
transmitter with asymptotically negligible error probability. The
second user transmits a message encoded across all layers. This in
turn allows receiver $2$ to decode the message of the second user
jointly across those layers that do not experience interference
from the first user on average. Consider a level
$n\in\mathcal{I}_{1}$ at the first transmitter. This level
interferes with the data of the second user at the second receiver
when $N_{21}\geq n$ and $N_{21}-N_{22}<n$. Thus, all the levels of
the first user interfere on an average with
$\mathbb{E}[\sum_{n\in\mathcal{I}_{1}}1_{N_{21}\geq
n,N_{21}-N_{22}<n}]=%
{\textstyle\sum_{n\in\mathcal{I}_{1}}}
\Pr(N_{21}\geq n,N_{21}-N_{22}<n)$ bits. Hence, for reliable reception,
transmitter $2$ needs to transmit at an average rate\
\begin{equation}
R_{2}=\mathbb{E}[N_{22}-\sum_{n\in\mathcal{I}_{1}}\Pr(N_{21}\geq
n,N_{21}-N_{22}<n)]
\end{equation}
bits/channel use across all levels. The sum-rate is then given by
(\ref{Mixed_SR}).
\end{proof}

\begin{lemma}
\label{condifc} For every $n\in\lbrack1,q],$ let
\begin{subequations}
\label{Lemma_Cond}%
\begin{align}
&
\begin{array}
[c]{cc}
& A_{1}\left(  n\right)  =1_{(N_{21}<n\leq N_{11})\cup\left(  \min(N_{11}%
,N_{21}-N_{22})\geq n\right)}
\end{array}
\\
&
\begin{array}
[c]{cc} & A_{2}\left(  n\right)  =1_{(N_{11}<n,N_{21}\geq
n,N_{21}-N_{22}<n)}
\end{array}
\\
&
\begin{array}
[c]{cc}%
s.t. & \left(  \mathbb{E}\left[  A_{1}\left(  n\right) \right]
-\mathbb{E}\left[ A_{2}\left( n\right) \right]  \right)
^{+}=\mathbb{E}\left[ A_{1}\left( n\right)  \right]  .
\end{array}
\end{align}
Given (\ref{Lemma_Cond}), the sum-rate in (\ref{Mixed_SR}) simplifies to
\end{subequations}
\begin{multline}
\mathbb{E}[\min(N_{11}+N_{22}+(N_{11}-N_{21})^{+},\\
\max(N_{11},N_{21},N_{22},N_{11}+N_{22}-N_{21}))].
\end{multline}

\end{lemma}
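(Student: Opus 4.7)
My plan is to reduce (\ref{Mixed_SR}) to a single pointwise expectation in three steps, with the main content being a final case-by-case algebraic identity.

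First, I would use $A_1(n)$ and $A_2(n)$ as a bookkeeping device. Decomposing $\Pr(N_{11}\geq n)$ and $\Pr(N_{21}\geq n, N_{21}-N_{22}<n)$ by conditioning on $\{N_{11}\geq n\}$ and $\{N_{21}\geq n\}$, and noting that the two events defining $A_1(n)$ are disjoint (the first lies in $\{N_{21}<n\}$, the second forces $N_{21}\geq N_{22}+n\geq n$), routine accounting yields
\[
\Pr(N_{11}\geq n) - \Pr(N_{21}\geq n, N_{21}-N_{22}<n) = \mathbb{E}[A_1(n)] - \mathbb{E}[A_2(n)].
\]
Hypothesis (\ref{Lemma_Cond}) then strips the positive part termwise, so (\ref{Mixed_SR}) rewrites as $\mathbb{E}[N_{22} + \sum_{n=1}^q A_1(n)]$.

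Second, disjointness inside $A_1(n)$ lets me evaluate the sum pointwise: $\sum_{n=1}^q 1_{N_{21}<n\leq N_{11}} = (N_{11}-N_{21})^+$ and $\sum_{n=1}^q 1_{n\leq \min(N_{11}, N_{21}-N_{22})} = \min(N_{11}, (N_{21}-N_{22})^+)$ (using $N_{11},N_{21}\geq 0$ and that all variables are bounded by $q$). Hence the sum rate equals $\mathbb{E}[N_{22} + (N_{11}-N_{21})^+ + \min(N_{11}, (N_{21}-N_{22})^+)]$.

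The main obstacle is the third step: verifying the pointwise identity
\[
N_{22} + (N_{11}-N_{21})^+ + \min\bigl(N_{11}, (N_{21}-N_{22})^+\bigr) = \min\bigl(N_{11}+N_{22}+(N_{11}-N_{21})^+,\ \max(N_{11},N_{21},N_{22},N_{11}+N_{22}-N_{21})\bigr).
\]
I would split on the sign of $N_{21}-N_{22}$ and on the ordering of $N_{11}$ relative to $N_{21}$ and to $N_{21}-N_{22}$; five regimes arise, and in each both sides collapse to the same member of $\{N_{11}+N_{22},\ N_{11},\ N_{21},\ N_{11}+N_{22}-N_{21},\ N_{22}\}$ after direct substitution. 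Taking expectations then yields the claim.
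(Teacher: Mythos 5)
Your proposal is correct and follows essentially the same route as the paper: the same identification of the termwise difference with $\mathbb{E}[A_1(n)]-\mathbb{E}[A_2(n)]$ (the paper uses $\Pr(\mathcal{A})-\Pr(\mathcal{B})=\Pr(\mathcal{A}\backslash\mathcal{B})-\Pr(\mathcal{B}\backslash\mathcal{A})$), the same use of the hypothesis to drop the positive part, the same summation to $\mathbb{E}[N_{22}]+\mathbb{E}[(N_{11}-N_{21})^{+}]+\mathbb{E}[\min(N_{11},(N_{21}-N_{22})^{+})]$, and the same final pointwise algebraic identity (which the paper carries out inside expectations). No gaps.
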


\begin{remark}
Choosing $N_{11}$ as deterministic is a sufficient condition for
(\ref{Lemma_Cond}).
\end{remark}

\begin{proof}
Consider the $n^{th}$ term in the summation over $\mathcal{I}_{1}$ in
(\ref{Mixed_SR}) in Theorem \ref{Th_Mixed}. Using the fact that for any two
sets $\mathcal{A}$ and $\mathcal{B}$, $\Pr(\mathcal{A})-\Pr(\mathcal{B}%
)=\Pr(\mathcal{A}\backslash\mathcal{B})-\Pr(\mathcal{B}\backslash\mathcal{A})$,
we
have%
\begin{equation}%
\begin{array}
[c]{c}%
(\Pr(N_{11}\geq n)-\Pr(N_{21}\geq n,N_{21}-N_{22}<n))^{+}=\\
(\Pr(N_{21}<n\leq N_{11}\cup\min(N_{11},N_{21}-N_{22})\geq n)\\
-\Pr(N_{11}<n\leq N_{21},N_{21}-N_{22}<n))^{+}.
\end{array}
\label{Mix_Simp}%
\end{equation}
Substituting (\ref{Lemma_Cond}) into (\ref{Mix_Simp}), every term within the
summation in (\ref{Mix_Simp}) simplifies as
\begin{align}
\Pr(N_{21}  &  <n\leq N_{11}\cup\min(N_{11},N_{21}-N_{22})\geq n)\nonumber\\
=  &  \Pr(N_{21}<n\leq N_{11})+\Pr(\min(N_{11},N_{21}-N_{22})\geq
n).
\end{align}
Summing over all $n\in\lbrack1,q]$ and adding $\mathbb{E}[N_{22}]$, the
sum-rate in (\ref{Mixed_SR}) then simplifies as%
\begin{align}
&  \mathbb{E}[N_{22}]+\mathbb{E}[(N_{11}-N_{21})^{+}]\\
&  +\mathbb{E}[\min(N_{11},(N_{21}-N_{22})^{+})]\\
&  =\mathbb{E}[\max(N_{11}-N_{21},0)]\nonumber\\
&  \text{ \ \ \ }+\mathbb{E}[\min(N_{11}+N_{22},\max(N_{21},N_{22}))]\\
&  =\mathbb{E}[\min(N_{11}+N_{22}+(N_{11}-N_{21})^{+},\nonumber\\
&  \text{ \ \ \ \ }\max(N_{11},N_{21},N_{22},N_{11}+N_{22}-N_{21}))].
\end{align}

\end{proof}

\begin{theorem}
\label{Th_Mix_SC}The sum capacity of a class of mixed layered erasure IFCs for
which the condition (\ref{Lemma_Cond}) of Lemma \ref{condifc} is satisfied and
$N_{21}\leq N_{11}+N_{22}$ with probability 1 is given by
\begin{equation}
\mathbb{E}[\max(N_{11},N_{21},N_{22},N_{11}+N_{22}-N_{21})].
\end{equation}

\end{theorem}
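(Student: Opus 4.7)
The plan is to combine the already-established outer bound from Theorem \ref{Th_OB} with the achievability scheme of Theorem \ref{Th_Mixed}, whose rate expression has been simplified in Lemma \ref{condifc}. The only remaining work is to show that, under the hypothesis $N_{21}\le N_{11}+N_{22}$ almost surely, the two arguments of the $\min$ in Lemma \ref{condifc} collapse, and to observe that the resulting achievable rate exactly matches the outer bound.

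First I would recall that by Theorem \ref{Th_OB}, every achievable rate pair satisfies $R_{1}+R_{2}\le \mathbb{E}[\max(N_{11},N_{22},N_{21},N_{11}+N_{22}-N_{21})]$, which is precisely the claimed sum capacity. So the problem reduces to proving that the achievable scheme of Theorem \ref{Th_Mixed} attains this value. Since condition (\ref{Lemma_Cond}) holds by assumption, Lemma \ref{condifc} already tells us that the achievable sum rate of Theorem \ref{Th_Mixed} equals
\begin{equation}
\mathbb{E}\bigl[\min\bigl(N_{11}+N_{22}+(N_{11}-N_{21})^{+},\,M\bigr)\bigr],
\end{equation}
where $M := \max(N_{11},N_{21},N_{22},N_{11}+N_{22}-N_{21})$.

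The key step is then a pointwise comparison showing $M \le N_{11}+N_{22}+(N_{11}-N_{21})^{+}$ almost surely, so the $\min$ always picks $M$. I would check the four candidates in $M$ individually: $N_{11}\le N_{11}+N_{22}+(N_{11}-N_{21})^{+}$ since $N_{22},(N_{11}-N_{21})^{+}\ge 0$; $N_{22}$ similarly; $N_{11}+N_{22}-N_{21}\le N_{11}+N_{22}\le N_{11}+N_{22}+(N_{11}-N_{21})^{+}$; and finally $N_{21}\le N_{11}+N_{22}\le N_{11}+N_{22}+(N_{11}-N_{21})^{+}$, which is the only place the hypothesis $N_{21}\le N_{11}+N_{22}$ w.p.~$1$ is actually used. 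Taking expectation gives the achievable sum rate $\mathbb{E}[M]$, which coincides with the outer bound, completing the proof.

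There is no real obstacle beyond this bookkeeping: all the heavy lifting (achievability in Theorem \ref{Th_Mixed}, its simplification via the set-difference identity in Lemma \ref{condifc}, and the outer bound in Theorem \ref{Th_OB}) has already been done. The contribution of the present theorem is simply to identify the sub-regime in which the simplified achievable expression meets the outer bound, and the hypothesis $N_{21}\le N_{11}+N_{22}$ is exactly what makes the term $N_{21}$ inside $M$ dominated by the slack term $N_{11}+N_{22}+(N_{11}-N_{21})^{+}$.
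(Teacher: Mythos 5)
Your proposal is correct and is precisely the argument the paper leaves implicit: it states Theorem \ref{Th_Mix_SC} without a proof, expecting the reader to combine the outer bound of Theorem \ref{Th_OB} with the simplified achievable rate of Lemma \ref{condifc} and note that, under $N_{21}\le N_{11}+N_{22}$ a.s., the first argument of the pointwise $\min$ dominates so the achievable rate collapses to $\mathbb{E}[\max(N_{11},N_{21},N_{22},N_{11}+N_{22}-N_{21})]$. Your case-by-case verification of that domination is exactly the missing bookkeeping, correctly done.
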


\begin{example}
\label{exevs}(Ergodic Very Strong) Consider a layered IFC with $q=4$ and two
fading states: the first state with $N_{11}=2,$ $N_{21}=1$ and $N_{22}=4$ occurs
with probability $1/2$, and the second state with $N_{11}=1,$ $N_{21}=4$ and
$N_{22}=1$ with probability $1/2$. The first state is weak while the second is
very strong, but overall the net mixture is ergodic very strong. Thus, the sum
capacity of $4$ bits/channel use can be attained.
\end{example}

We now present two examples for the mixed IFC. For the first, the sum capacity
is given by Theorem \ref{Th_Mix_SC}; for the second, we present a new sum
capacity achieving strategy.

\begin{example}
\label{exmix}(Mixed) Consider a layered IFC with $q=4$ and two fading states:
the first state with $N_{11}=2,$ $N_{21}=1$ and $N_{22}=2$ occurs with
probability $1/2$, and the second state with $N_{11}=3,$ $N_{21}=4$ and $N_{22}=1$
with probability $1/2$. The first state is weak while the second is strong,
but overall the net mixture satisfies all the conditions in Theorem
\ref{Th_Mix_SC}. (Note that although $N_{11}$ is not deterministic, the
condition in Lemma \ref{condifc} is satisfied.) Thus, the ergodic sum capacity
of $7/2$ bits/channel use can be attained. \begin{figure}[ptbh]
\centering
\subfigure[State 1]{
\includegraphics[
trim=1.100000in 1.100000in 2.139081in 1.100000in,
width=1in,
]{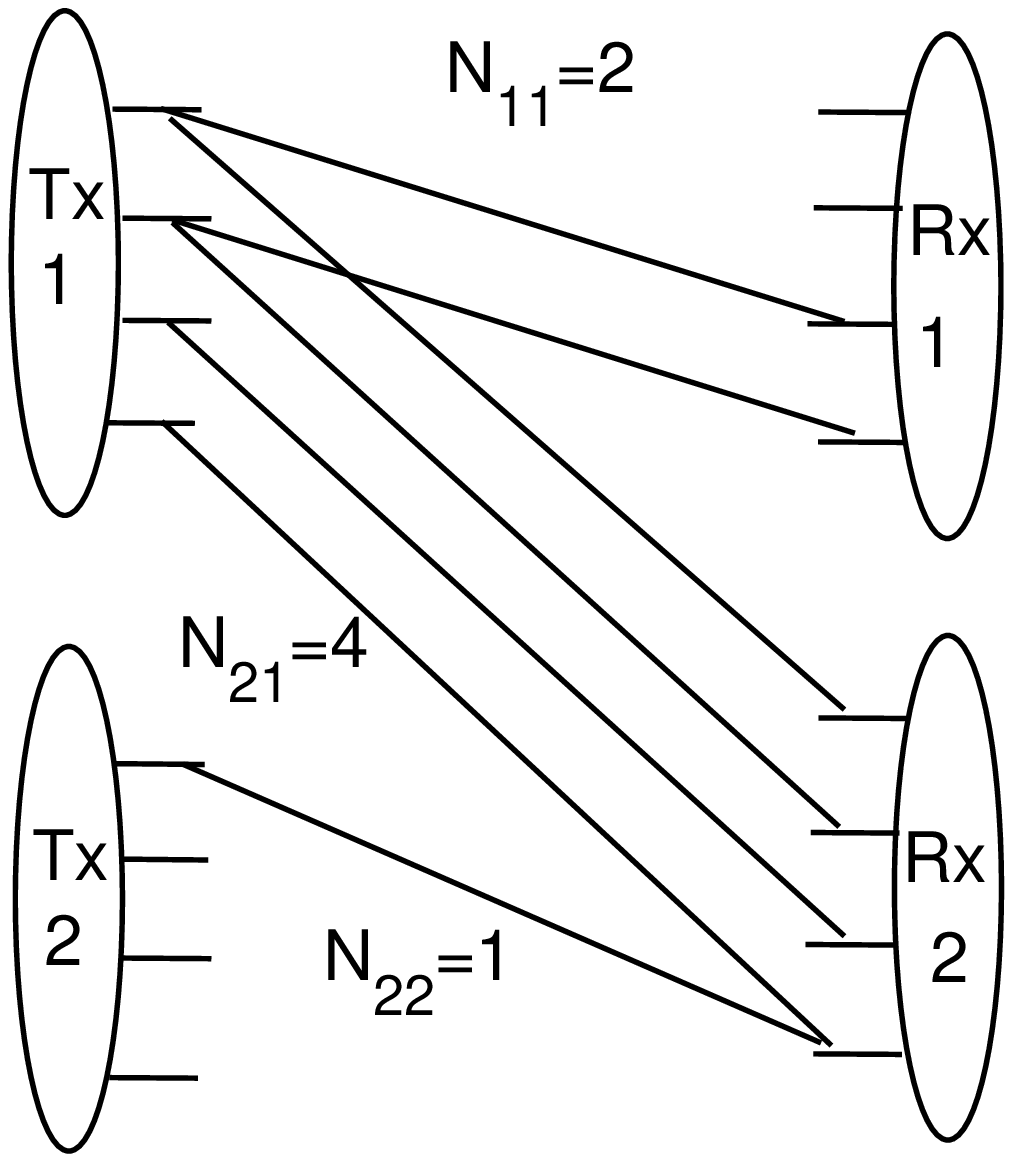}\label{Fig_Ex2}}\hspace{1cm} \subfigure[State 2 ]{
\includegraphics[
trim=1.100000in 1.100000in 2.139081in 1.100000in,
width=1in,
]{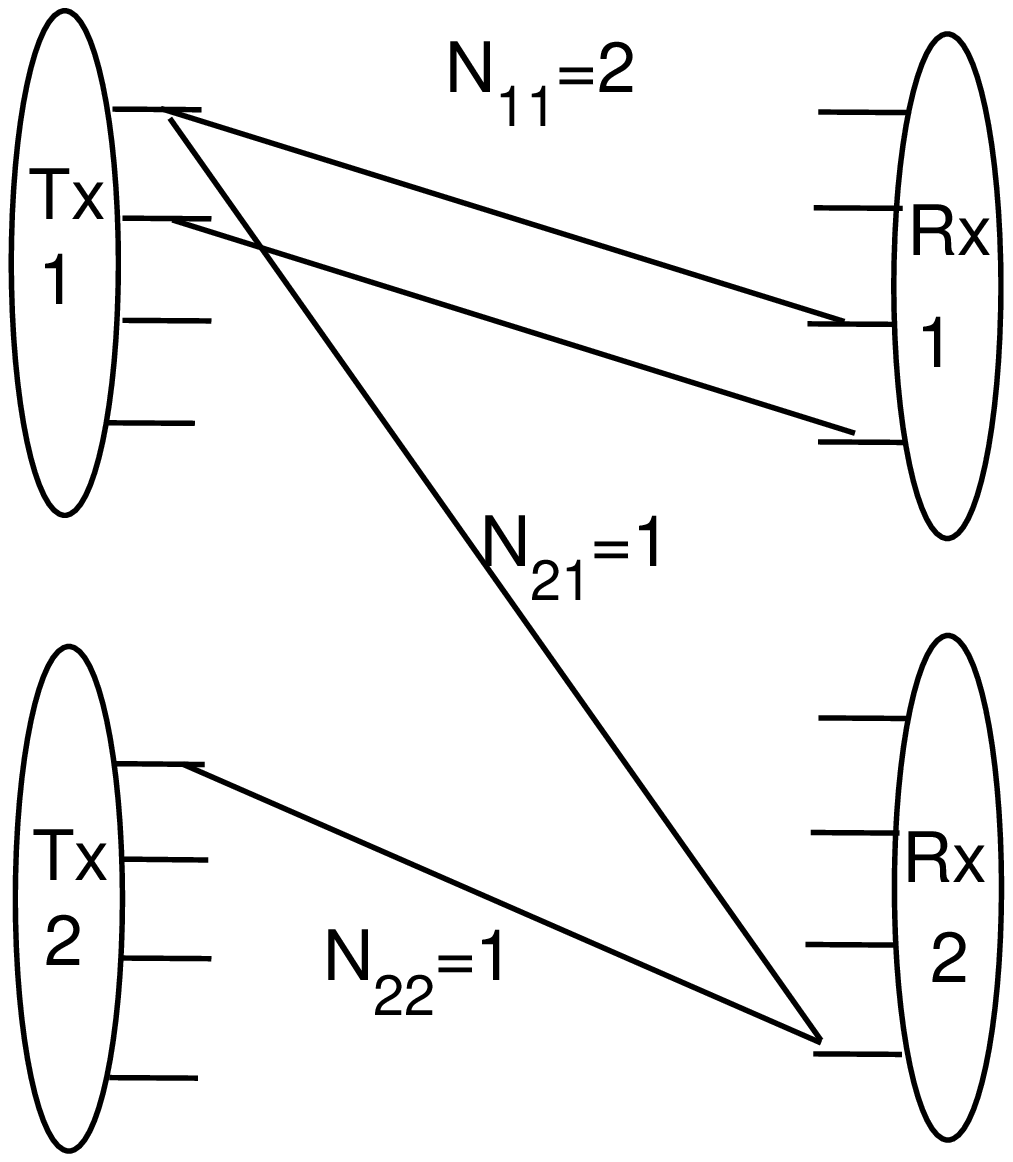}\label{Fig_Ex3}}\caption{The layered IFC in Example
\ref{exifc} in each of the two states}%
\end{figure}
\end{example}

\begin{example}
\label{exifc} (Mixed) Consider a layered IFC with $q=4$ and two fading states:
the first state with $N_{11}=2,$ $N_{21}=4$ and $N_{22}=1$ occurs with
probability $1/2$, and the second state with $N_{11}=2,$ $N_{21}=1$ and $N_{22}=1$
with probability $1/2$. The first state is very strong while the second is
weak though the IFC is not ergodic very strong. The states satisfy the
condition in Lemma \ref{condifc}, and thus, the sum rate of 5/2 bits/channel
use can be achieved. However, applying Theorem \ref{Th_OB} the outer bound on
sum capacity is 3 bits/channel use. We here present an alternate achievable
strategy that achieves this outer bound. At its second level, the first
transmitter sends a message at a rate of 1 bit/channel use which its intended
receiver can always decode but the second receiver cannot. Suppose receiver 2
does not decode this second level in either channel state. Thus, with respect
to receiver 2, the equivalent channel has two fading states: the first state
$N_{11}=1,$ $N_{21}=3$ and $N_{22}=1$ with probability $1/2$, and the second state
$N_{11}=1,$ $N_{21}=1$ and $N_{22}=1$ with probability $1/2$. This is an ergodic
strong IFC and hence a sum capacity of 2 bits/channel use can be achieved.
Combining that with the rate sent to receiver 1 from the second level of
transmitter 1, we achieve a sum capacity of 3. Note that our strategy uses a
public and a private message from the first transmitter at the first and
second levels, respectively. Thus, while the second level from the first
transmitter is received at the second receiver half of the time, the message
on this level is considered private from the second user. This is in contrast
with the deterministic interference channel where the message reaching the other
receiver is always public.

\end{example}

\section{\label{Sec_5}Concluding Remarks}

 We have developed inner and outer bounds on the
sum capacity of a class of layered erasure ergodic fading IFCs. We
have shown that the outer bounds are tight for the following
sub-classes: i) weak, ii) strong, iii) \textit{ergodic very
strong} (mix of strong and weak), and (iv) a sub-class of mixed
interference (mix of SnVS and weak), where each sub-class is
uniquely defined by the fading statistics. Our work demonstrates
that for layered erasure IFCs with sub-channels that are not
uniquely of one kind, i.e., that are not all strong but not very
strong or very strong or weak, joint encoding is required across
layers. Of immediate interest is extending these results to the
ergodic fading Gaussian IFCs without transmitter CSI. Furthermore,
we are also exploring extending the results of Theorem
\ref{Th_Mix_SC} to both general layered IFCs as well as ergodic
fading Gaussian IFCs. 


\end{document}